\newtheorem{thm}{Theorem}
\newtheorem{lem}{Lemma}
\theoremstyle{definition}
\theoremstyle{remark}
\newcommand{\R}{\mathbb R}
\newcommand{\F}{\mathbb F}
\newcommand{\N}{\mathbb N}
\numberwithin{equation}{section}
\newcommand{\p}{\varphi}
\newcommand{\eps}{\varepsilon}
\begin{document}

\title{The rank invariant stability via interleavings}

%    Information for first author
\author{Claudia Landi}
\address{Dipartimento
di Scienze e Metodi dell'Ingegneria, Universit\`a di Modena e
Reggio Emilia, Via Amendola 2, Pad. Morselli, I-42100 Reggio
Emilia, Italia\newline ARCES, Universit\`a di Bologna, via Toffano
$2/2$, I-$40125$ Bologna, Italia} \email{claudia.landi@unimore.it}

%    General info
\subjclass[2010]{13P20, 	68U05}

%\date{} %and, in revised form, June 22, 2001.}
%
%%\dedicatory{This paper is dedicated to our advisors.}
%
\keywords{Barcodes, bottleneck distance, multidimensional matching distance}

\begin{abstract}
A lower bound for the interleaving distance on persistence vector spaces is given in terms of rank invariants. This offers an alternative proof of the stability of rank invariants.
\end{abstract}

\maketitle

\section*{Introduction}
Increasingly often in recent years the shape of data has   been analyzed using  persistent homology \cite{Carlsson2009}.   Given a topological space (e.g., a simplicial complex built upon a finite set of points in $\R^n$), one  constructs a nested family of subspaces called a
filtration, and studies the topological events occurring along the filtration. These are encoded in a structure  known as a persistence vector space, and consists of a family of vector spaces obtained as  the homology
of the  subspaces in the filtration, connected by linear maps  induced in homology by the inclusion maps of the filtration. By constructing
a filtration and taking its  homology one transforms a topological problem into linear algebra. This theory  is well understood when filtrations only depend on one parameter, whereas still presents many open problems  in the  multi-parameter case, commonly referred to  as the multidimensional persistent homology theory. On the other hand, applications strongly motivate the interest in multi-filtrations.

Recently, the problem of comparing  persistence vector spaces in a stable and optimal way has been successfully solved by using families of linear maps between  persistence vector spaces, known as interleavings \cite{LesnickXX}. However, in concrete cases, computing the interleaving distance is still not a viable option because of its complexity. Therefore it may be useful to find estimates for it.

The rank invariant  is the most studied invariant of persistence vector spaces. The primary goal of this paper is to show that the multidimensional matching distance  on rank invariants studied in \cite{Cerri2013}  provides a lower bound for the interleaving distance. From a different standpoint,  this fact can be viewed as a new proof that the rank invariant is a stable invariant when rank invariants are compared via 1-dimensional reduction along lines. Indeed, the secondary goal of this paper is to obtain  new proofs of the rank invariant stability and internal stability results using interleavings.      

\section{Background defitions}
Let $n\in \N_{>0}$. For every $u=(u_1,\dots,u_n),v=(v_1,\dots,v_n)\in\R^n$, we write $u\preceq v$ (resp. $u\prec v$, $u\succ v$, $u\succeq v$) if and only if $u_i\leq v_i$ (resp. $u_i<v_i$, $u_i>v_i$, $u_i\geq v_i$) for  $i=1,\dots,n$. Note that $u\succ v$ is not the negation of $u\preceq v$. 

When $\R^n$ is viewed as a vector space, its elements are denoted using overarrows.  Moreover, in this case, we endow $\R^n$ with the max-norm defined by $\|\vec v\|_{\infty}=\max_i |v_i|$.

For a field $\F$, an $n$-dimensional persistence vector space $\bf M$ is a family $\{{\bf M}_u\}_{u\in \R^n}$ of  $\F$-vector spaces, together with a family of linear maps $\{\p_{\bf M}(u,v): {\bf M}_u \to {\bf M}_v\}_{u\preceq v}$ such that, for all $u\preceq v\preceq w\in \R^n$, $\p_{\bf M}(u,u)=\mathrm{Id}_{{\bf M}_u}$, and $\p_{\bf M}(v,w)\circ \p_{\bf M}(u,v)=\p_{\bf M}(u,w)$.
We call the maps $\p_{\bf M}(u,v)$ transition maps. {\bf M} is said to be {\em pointwise finite dimensional},  if $\dim({\bf M}_u) < \infty$ for all $u \in \R^n$.

A morphism $\alpha:{\bf M}\to{\bf N}$  of persistence vector spaces is  a collection of linear maps $\alpha(u) : {\bf M}_u \to {\bf N}_u$ such that $\alpha(v)\circ \p_{\bf M}(u,v)=\p_{\bf N}(u,v)\circ \alpha(u)$, for all $u\preceq v\in \R^n$.

For {\bf M} a persistence vector space, and for $\eps \ge 0$, ${\bf M}(\vec \eps)$ denotes the module ${\bf M}$ diagonally shifted by $\vec \eps= (\eps, \eps,\ldots, \eps)$:  ${\bf M}(\vec \eps)_u = {\bf M}_{u+\vec \eps}$,
and for $u\preceq v\in\R^n$, $\p_{{\bf M}(\vec \eps)}(u, v) = \p_{\bf M}(u + \vec \eps, v + \vec \eps)$. We also  let ${\bf M}(\vec \eps) : {\bf M} \to {\bf M}(\vec \eps)$ be the diagonal $\eps$-transition
morphism, that is  the morphism whose restriction to ${\bf M}_u$ is the linear map $\p_{\bf M}(u, u +\vec\eps)$ for all $u\in\R^n$. We say that two $n$-modules ${\bf M}$ and ${\bf N}$ are $\eps$-interleaved if there exist morphisms
$\alpha: {\bf M} \to {\bf N}(\vec\eps)$ and $\beta : {\bf N}\to  {\bf M}(\vec\eps)$ such that $\beta(\vec\eps) \circ \alpha = \p_{\bf M}(2\vec\eps)$ and
$\alpha(\vec\eps) \circ \beta = \p_{\bf N}(2\vec\eps)$. The {\em interleaving distance} on persistence vector spaces is defined  by setting
$$d_I({\bf M},{\bf N}) = \inf \{\eps\in[0,+\infty): \mbox{${\bf M}$ and ${\bf N}$ are $\eps$-interleaved}\}.$$

In dimension $n=1$ the theory is well understood and nicely reviewed in \cite{Lesnick2014}.  In particular,  any pointwise finite dimensional persistence
module  is completely representable by a   unique   multiset of intervals  $B({\bf M})$, called a {\em barcode} (or equivalently, a multiset of points of $\R^2$, called a {\em persistence diagram}).  The {\em bottleneck distance}  $d_B$ is equal to the interleaving distance on barcodes, and the  Algebraic Stability of Persistence Barcodes states that for any two pointwise finite dimensional persistence vector spaces ${\bf M}$ and ${\bf N}$ of dimension 1 it holds that $d_B(B({\bf M}), B({\bf N})) \le d_I({\bf M}, {\bf N})$.

In dimension $n>1$ the persistence vector space structure is still matter of investigation, and numeric invariants are often used instead. The {\em rank invariant} of persistence vector spaces is  defined by setting $\rho_{\bf M}(u,v)=\mathrm{rank}\p_{\bf M}(u,v)$ for every $u\preceq v\in \R^n$. In \cite{Cerri2013} a  readily computable
metric on rank invariants, the multidimensional matching distance $d_{match}$, is defined  via  1-dimensional reductions. 

In the next section the connections between the rank invariant and interleavings are highlighted.

\section{The rank invariant and interleavings}
 
Given a line $L$ in $\R^n$  parameterized by $u=s\vec m+b$, with $m^*=\min_i m_i>0$, we denote by ${\bf M}_L$ the persistence vector space parameterized by $s\in\R$ and obtained by  restriction of ${\bf M}$ to $L$: $({\bf M}_L)_s={\bf M}_u$ and $\p_{{\bf M}_L}(s,s')=\p_{\bf M}(u,u')$ for $u=s\vec m+b$ and $u'=s'\vec m+b$.

\begin{lem}\label{inter}
 If ${\bf M}$ and ${\bf N}$ are $\eps$-interleaved then  ${\bf M}_L$ and ${\bf N}_L$ are $\frac{\eps}{m^*}$-interleaved.
\end{lem}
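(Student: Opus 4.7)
The plan is to construct the interleaving morphisms on the restrictions $\mathbf{M}_L$ and $\mathbf{N}_L$ out of the given interleaving morphisms $\alpha:\mathbf{M}\to\mathbf{N}(\vec\eps)$ and $\beta:\mathbf{N}\to\mathbf{M}(\vec\eps)$. The key geometric observation is that, since $m^*=\min_i m_i>0$, for every $\eps\ge 0$ we have $(\eps/m^*)\,\vec m\succeq \vec\eps$ coordinate-wise. Consequently, along the line $L$, shifting the parameter $s$ by $\eps/m^*$ moves the point $u=s\vec m+b$ by a vector that dominates the diagonal shift $\vec\eps$, so the transition maps of $\mathbf{M}$ and $\mathbf{N}$ can absorb the discrepancy.

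Concretely, for $u=s\vec m+b$ set
\[
\alpha_L(s)\;=\;\p_{\mathbf{N}}\bigl(u+\vec\eps,\,u+(\eps/m^*)\vec m\bigr)\circ\alpha(u)\colon(\mathbf{M}_L)_s\to(\mathbf{N}_L)_{s+\eps/m^*},
\]
and symmetrically $\beta_L(s)=\p_{\mathbf{M}}(u+\vec\eps,u+(\eps/m^*)\vec m)\circ\beta(u)$. These are well defined precisely because $(\eps/m^*)\vec m\succeq \vec\eps$. The first step is to check that $\alpha_L$ and $\beta_L$ are morphisms of $1$-dimensional persistence vector spaces, i.e.\ that they commute with the transition maps of $\mathbf{M}_L$ and $\mathbf{N}_L(\eps/m^*)$. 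This follows routinely from the functoriality of the transition maps of $\mathbf{M}$ and $\mathbf{N}$ together with the fact that $\alpha$ and $\beta$ are morphisms of $n$-dimensional persistence vector spaces.

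The second step is to verify the two interleaving identities, namely $\beta_L(\eps/m^*)\circ\alpha_L=\p_{\mathbf{M}_L}(2\eps/m^*)$ and $\alpha_L(\eps/m^*)\circ\beta_L=\p_{\mathbf{N}_L}(2\eps/m^*)$. Expanding the left-hand side of the first identity at parameter $s$ gives a composition of $\alpha(u)$, two transition maps of $\mathbf{N}$, and $\beta(u+\vec\eps)$ (after sliding one transition map past $\beta$ using the fact that $\beta$ is a morphism), and the assumption $\beta(\vec\eps)\circ\alpha=\p_{\mathbf{M}}(2\vec\eps)$ collapses the middle to $\p_{\mathbf{M}}(u,u+2\vec\eps)$. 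A final application of functoriality rewrites the resulting composition as $\p_{\mathbf{M}}(u,u+(2\eps/m^*)\vec m)$, which is exactly $\p_{\mathbf{M}_L}(2\eps/m^*)$ at $s$. The second identity is symmetric.

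I do not expect a real obstacle here: everything reduces to the single inequality $(\eps/m^*)\vec m\succeq \vec\eps$ and to bookkeeping with transition maps. The only point where one has to be slightly careful is keeping track of base points when sliding transition maps past $\alpha$ or $\beta$, and checking that the resulting composition in $\mathbf{M}$ (respectively $\mathbf{N}$) is indeed a transition map along $L$ of length $2\eps/m^*$ rather than of a vector merely dominating it.
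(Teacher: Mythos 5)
Your proposal is correct and follows essentially the same approach as the paper: you construct $\alpha_L$ and $\beta_L$ by composing the given interleaving morphisms with transition maps along $L$, and the key inequality $(\eps/m^*)\vec m\succeq\vec\eps$ is exactly the coordinate computation the paper performs (there written as $u\preceq u'-\vec\eps$ with $u'=u+(\eps/m^*)\vec m$). Your expression $\p_{\bf N}(u+\vec\eps,u')\circ\alpha(u)$ for $\alpha_L(s)$ equals the paper's $\alpha(u'-\vec\eps)\circ\p_{\bf M}(u,u'-\vec\eps)$ by the morphism property of $\alpha$, so the two constructions are the same map written with the square commuted the other way.
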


\begin{proof}
Because ${\bf M}$ and ${\bf N}$ are $\eps$-interleaved, there exist two morphisms $\alpha:{\bf M} \to {\bf N}(\vec \eps)$ and $\beta:{\bf N} \to {\bf M}(\vec \eps)$ such that $\beta(u+\vec\eps)\circ \alpha(u)=\p_{\bf M}(u,u+2\vec\eps)$ and $\alpha(u+\vec\eps)\circ \beta(u)=\p_{\bf N}(u,u+2\vec\eps)$, for every $u\in\R^n$.

We need to prove that  there are two morphisms $\alpha_L:{\bf M}_L\to {\bf N}_L(\frac{\eps}{m^*})$ and $\beta_L: {\bf N}_L\to {\bf M}_L(\frac{\eps}{m^*})$ such that $\beta_L\left(s+\frac{\eps}{m^*}\right)\circ \alpha_L(s)=\p_{{\bf M}_L}(s,s+2\frac{\eps}{m^*})$ and $\alpha_L(s+\frac{\eps}{m^*})\circ \beta_L(s)=\p_{{\bf N}_L}(s,s+2\frac{\eps}{m^*})$, for every $s\in\R$. The idea underlying the construction of $\alpha_L$ and $\beta_L$ is illustrated in Figure~\ref{fig:stability}.

Let $\hat\iota=\mathrm{argmin} _i m_i$. For any $\bar u\in L$,   we set $\bar u'$ as the point of intersection between $L$ and the hyperplane $\pi': u_{\hat\iota}=\bar u_{\hat\iota}+\eps$ containing $\bar u+\vec \eps$. A direct computation shows that $\bar u'=\bar u+\frac{\eps}{m^*}\cdot \vec m$. Hence $\bar u\prec \bar u'$ and $\bar u'=s'\vec m+b$ with $s'=s+\frac{\eps}{m^*}$. Analogously,  $u''=\left (s+2\frac{\eps}{m^*}\right)\vec m+b$ turns out to be the point of intersection between $L$ and the hyperplane $\pi'': u_{\hat\iota}=\bar u'_{\hat\iota}+\eps$ containing $\bar u'+\vec \eps$.

We are now ready to  construct $\alpha_L$ and $\beta_L$. Recall that $u,u'$ are functions of $s$. We define $\alpha_L: {\bf M}_L\to {\bf N}_L(\frac{\eps}{m^*})$ by setting $\alpha_L(s)=\alpha(u'-\vec\eps)\circ \p_{\bf M}(u,u'-\vec\eps)$ (red arrows in Figure~\ref{fig:stability}).  $\alpha_L$ is well defined because a direct computation shows that $u\preceq u'-\vec\eps$. We define $\beta_L: {\bf N}_L\to {\bf M}_L(\frac{\eps}{m^*})$ by setting $\beta_L(s)=\p_{\bf M}(u+\vec\eps,u')\circ\beta(u)$. Also $\beta_L$ is well defined because  $u+\vec\eps\preceq u'$ (green arrows in Figure~\ref{fig:stability}).

\begin{figure}
%\psfrag{u''}{$u''$}\psfrag{u'}{$u'$}\psfrag{u}{$u$}\psfrag{u+e}{$u+\vec\eps$}\psfrag{u''-e}{$u''-\vec\eps$}\psfrag{u'-e}{$u'-\vec\eps$}\psfrag{u'+e}{$u'+\vec\eps$}\psfrag{p}{$\p_M$}\psfrag{q}{$\p_M$}\psfrag{a}{$\alpha$}\psfrag{b}{$\beta$}\psfrag{L}{$L$}\psfrag{L'}{$L'$}
%\includegraphics[width=4cm]{mappe.eps}
\includesvg[width=0.5\textwidth]{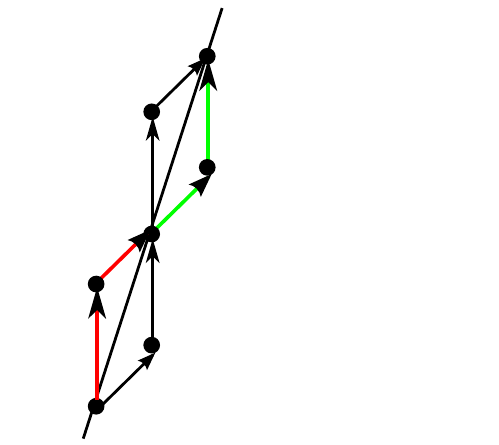}
\caption{The maps $\alpha_L$ and $\beta_L$ defined in Proposition~\ref{inter}:   $\alpha_L$ is given by the composition of red maps, $\beta_L$ is given by the composition of green maps.} \label{fig:stability}
\end{figure}

Observing that 
$$\beta_L\left(s+\frac{\eps}{m^*}\right)\circ \alpha_L(s)=\p_{\bf M}(u'+\vec\eps,u'')\circ\beta(u') \circ \alpha(u'-\vec\eps)\circ \p_{\bf M}(u,u'-\vec\eps),$$ and recalling that $\beta(u') \circ \alpha(u'-\vec\eps)= \p_{\bf M}(u'-\vec\eps,u'+\vec\eps)$, we obtain 
$$\beta_L\left(s+\frac{\eps}{m^*}\right)\circ \alpha_L(s)=\p_{\bf M}(u'+\vec\eps,u'')\circ \p_{\bf M}(u'-\vec\eps,u'+\vec\eps)\circ \p_{\bf M}(u,u'-\vec\eps)$$
immediately yielding
$$\beta_L\left(s+\frac{\eps}{m^*}\right)\circ \alpha_L(s)=\p_{\bf M}(u,u''),$$
or equivalently
$$\beta_L\left(s+\frac{\eps}{m^*}\right)\circ \alpha_L(s)=\p_{{\bf M}_L}(s,s+2\frac{\eps}{m^*}).$$

Analogously, observing that 
$$\alpha_L\left(s+\frac{\eps}{m^*}\right)\circ \beta_L(s)= \alpha(u''-\vec\eps)\circ \p_{\bf M}(u',u''-\vec\eps) \circ\p_{\bf M}(u+\vec\eps,u')\circ\beta(u)$$ and recalling that by definition of morphism of persistence modules we have 
$$\alpha(u''-\vec\eps)\circ \p_{\bf M}(u',u''-\vec\eps)=\p_{\bf N}(u'+\vec\eps,u'')\circ \alpha(u')$$ and $$\p_{\bf M}(u+\vec\eps,u')\circ\beta(u)=\beta(u'-\vec\eps)\circ\p_{\bf N}(u,u'-\vec\eps),$$ we obtain 
$$\alpha_L\left(s+\frac{\eps}{m^*}\right)\circ \beta_L(s)=\p_{\bf N}(u'+\vec\eps,u'')\circ \alpha(u')\circ \beta(u'-\vec\eps)\circ\p_{\bf N}(u,u'-\vec\eps).$$
Hence, from $\alpha(u')\circ \beta(u'-\vec\eps)=\p_{\bf N}(u'-\vec\eps,u'+\vec\eps)$, we get
$$\alpha_L\left(s+\frac{\eps}{m^*}\right)\circ \beta_L(s)=\p_{\bf N}(u'+\vec\eps,u'')\circ \p_{\bf N}(u'-\vec\eps,u'+\vec\eps)\circ\p_{\bf N}(u,u'-\vec\eps)$$
which immediately yields 
$$\alpha_L\left(s+\frac{\eps}{m^*}\right)\circ \beta_L(s)=\p_{\bf N}(u,u'')$$
or equivalently
$$\beta_L\left(s+\frac{\eps}{m^*}\right)\circ \alpha_L(s)=\p_{{\bf M}_L}(s,s+2\frac{\eps}{m^*}).$$
\end{proof}

\begin{thm}
For every pointwise finite dimensional modules $\bf M$ and $\bf N$, it holds that $d_{match}(\rho_{\bf M}, \rho_{\bf N})\le d_I({\bf M}, {\bf N})$.
\end{thm}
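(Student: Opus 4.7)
The plan is to combine Lemma~\ref{inter} with the one-dimensional Algebraic Stability Theorem recalled in the Background section, exploiting the fact that $d_{match}$ is defined by a supremum over lines of the bottleneck distances of the one-dimensional reductions, suitably reweighted by the factor $m^{*}$ that measures the steepness of the line.

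Fix any $\eps > d_I({\bf M}, {\bf N})$; by definition of the interleaving distance, ${\bf M}$ and ${\bf N}$ are $\eps$-interleaved. For every admissible line $L\subset\R^n$ parametrized by $u=s\vec m + b$ with $m^{*}=\min_i m_i>0$, Lemma~\ref{inter} then yields that the one-dimensional persistence vector spaces ${\bf M}_L$ and ${\bf N}_L$ are $\frac{\eps}{m^{*}}$-interleaved, so that $d_I({\bf M}_L,{\bf N}_L)\le \frac{\eps}{m^{*}}$.

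Since ${\bf M}$ and ${\bf N}$ are pointwise finite dimensional, so are their restrictions ${\bf M}_L$ and ${\bf N}_L$, and each therefore admits a barcode. Applying the one-dimensional Algebraic Stability of Persistence Barcodes to ${\bf M}_L$ and ${\bf N}_L$ gives
\[
d_B(B({\bf M}_L), B({\bf N}_L))\;\le\; d_I({\bf M}_L,{\bf N}_L)\;\le\;\frac{\eps}{m^{*}}.
\]
Because the rank invariants $\rho_{\bf M}$ and $\rho_{\bf N}$ determine the barcodes $B({\bf M}_L)$ and $B({\bf N}_L)$ (ranks of transition maps along $L$ are read off the barcode of the restriction, as in the setup of \cite{Cerri2013}), multiplying by $m^{*}$ and taking the supremum over all admissible lines $L$ recovers the multidimensional matching distance, yielding $d_{match}(\rho_{\bf M},\rho_{\bf N})\le \eps$. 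Letting $\eps \searrow d_I({\bf M},{\bf N})$ concludes the proof.

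The main technical obstacle is really the bookkeeping of the weight $m^{*}$: one has to make sure that the $\frac{1}{m^{*}}$-blow-up coming from Lemma~\ref{inter} is exactly cancelled by the $m^{*}$ weight used in the definition of $d_{match}$ in \cite{Cerri2013}, so that the supremum over lines produces a clean upper bound by $\eps$. Everything else is a direct chain of inequalities, and the only nontrivial external input is the one-dimensional algebraic stability theorem, which is available because of pointwise finite dimensionality.
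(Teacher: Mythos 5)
Your proposal is correct and follows essentially the same approach as the paper: combine Lemma~\ref{inter} with the one-dimensional Algebraic Stability Theorem and the definition of $d_{match}$ as a weighted supremum over lines, with the $m^{*}$ factors cancelling exactly. Your explicit $\eps \searrow d_I({\bf M},{\bf N})$ limiting argument is a slightly more careful way of handling the infimum in the definition of the interleaving distance than the paper's phrasing, but the content is identical.
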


\begin{proof}
By definition $d_{match}(\rho_{\bf M}, \rho_{\bf N})=\sup_{L:u=s\vec m +b}m^*\cdot d_B(B({\bf M}_L), B({\bf N}_L))$ where $L$ varies in the set of all the lines parameterized by $u=s\vec m+b$, with $m^*=\min_i m_i>0$, $\max_i m_i=1$, $\sum_{i=1}^n b_i=0$.    By Lemma~\ref{inter}, for any such line $L$, if ${\bf M}$ and ${\bf N}$ are $\eps$-interleaved then ${\bf M}_L$ and ${\bf N}_L$ are $\frac{\eps}{m^*}$-interleaved. Thus, $m^*\cdot d_I({\bf M}_L, {\bf N}_L)\le  d_I({\bf M}, {\bf N})$, yielding $m^*\cdot d_B(B({\bf M}_L), B({\bf N}_L))\le  d_I({\bf M}, {\bf N})$ by the Algebraic Stability of Persistence Barcodes. Hence the claim.
\end{proof}

\bigskip

Let us now consider the problem of internal stability of rank invariants. For $L$  parameterized by $u=s\vec m+b$, with $m^*=\min_i m_i>0$,  and $L'$ parameterized by $u=s\vec m'+b'$, with $m'^*=\min_i m_i'>0$,  we denote by ${\bf M}_L$ and ${\bf M}_{L'}$ the persistence modules parameterized by $s\in \R$ obtained by  restriction of ${\bf M}$ to $L$ and $L'$, respectively, and consider the interleavings between ${\bf M}_L$ and ${\bf M}_{L'}$.

\begin{thm}\label{internal}
Let $\bf M$ be a  pointwise finite dimensional persistence vector space for which there exist $c=(c_1,c_2,\ldots,c_n)\in\R^n$ such that $\p_{\bf M}(u,u')$ is an isomorphism for every $u,u'\in \R^n$ with $c\prec u\preceq u'$.   There exist constants $K,C>0$ such that  ${\bf M}_L$ and ${\bf M}_{L'}$ are $\eta$-interleaved, and therefore $d_B({\bf M}_L,{\bf M}_{L'})\le \eta$, with $$ \eta=\frac{K\cdot \|\vec m-\vec m'\|_\infty+ C\cdot \|b-b'\|_\infty}{m^*\cdot m'^*}.$$
\end{thm}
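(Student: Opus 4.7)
The strategy is to construct explicit $\eta$-interleaving morphisms $\alpha_L : {\bf M}_L \to {\bf M}_{L'}(\eta)$ and $\beta_L : {\bf M}_{L'} \to {\bf M}_L(\eta)$, mimicking the zig-zag construction of Lemma~\ref{inter} but replacing the external interleaving morphisms (which we now lack) by inverses of transitions in ${\bf M}$ supplied by the isomorphism hypothesis above $c$.

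For each $s\in\R$, I would pick an auxiliary point $w(s)\in\R^n$ with $u(s),u'(s+\eta)\preceq w(s)$ and $w(s)\succ c$; the coordinatewise maximum of $u(s)$, $u'(s+\eta)$, and a fixed point $c^+\succ c$ does the job. I would then define
\[
\alpha_L(s) \;=\; \p_{\bf M}(u'(s+\eta), w(s))^{-1} \circ \p_{\bf M}(u(s), w(s)),
\]
and, symmetrically, $\beta_L(s) = \p_{\bf M}(u(s+\eta), w'(s))^{-1} \circ \p_{\bf M}(u'(s), w'(s))$ for an analogous $w'(s)$. The inverses are legitimate by hypothesis, provided the arguments lie $\succ c$.

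To pin down $\eta$, one computes coordinatewise $u'(s+\eta)_i - u(s)_i = s(m'_i - m_i) + \eta m'_i + (b'_i - b_i)$ and the symmetric expression for $u(s+\eta)_i - u'(s)_i$; the inequalities needed for a $w(s)$ with the listed properties to exist, and for the interleaving compatibility points $u(s+2\eta)$, $u'(s+2\eta)$ to dominate $u'(s+\eta)$, $u(s+\eta)$ respectively, reduce to a coordinatewise linear bound on $\eta$. This decomposes naturally into a term linear in $\|\vec m-\vec m'\|_\infty$ and one linear in $\|b-b'\|_\infty$, with a common denominator $m^*\cdot m'^*$ coming from the rescaling of each line's parameter; absorbing the remaining factors into $K$ and $C$ yields the announced expression for $\eta$.

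The interleaving identities $\beta_L(s+\eta)\circ\alpha_L(s) = \p_{{\bf M}_L}(s,s+2\eta)$ and its twin then follow by composing the two zig-zags and cancelling the intermediate inverse transitions using commutativity of transitions in ${\bf M}$, exactly as in the computation that closes the proof of Lemma~\ref{inter}; the bound $d_B({\bf M}_L,{\bf M}_{L'})\le\eta$ is then immediate from the Algebraic Stability of Persistence Barcodes. The main obstacle I anticipate is the ``bad'' range of $s$ for which $u'(s+\eta)$ or $u(s+\eta)$ falls below $c$ and the inverse transition is not supplied by the hypothesis: the constants $K,C$ must be chosen with enough slack, or the prescription supplemented with the direct transition $\p_{\bf M}(u(s),u'(s+\eta))$ on its domain of definition, so that $\alpha_L$ and $\beta_L$ extend as genuine persistence-vector-space morphisms defined for all $s\in\R$ and so that both prescriptions agree, by functoriality of ${\bf M}$, wherever they overlap.
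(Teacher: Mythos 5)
Your proposal follows the same high-level strategy as the paper's proof: construct explicit interleaving morphisms between ${\bf M}_L$ and ${\bf M}_{L'}$ by composing transition maps of ${\bf M}$, inverting some of them where the isomorphism hypothesis above $c$ permits, and absorbing the resulting coordinate estimates into $K$ and $C$. The actual zig-zag you propose, however, is the ``dual'' of the paper's. Where the inverse transitions are available, you go \emph{up} from $u(s)$ to a dominating corner $w(s)\succeq u(s), u'(s+\eta), c^+$ and then come \emph{down} via the inverse $\p_{\bf M}(u'(s+\eta),w(s))^{-1}$. The paper instead goes \emph{down} along $L$ via the inverse $\p_{\bf M}^{-1}(\bar u,u)$ to a fixed pivot $\bar u$ (where $\bar s=\max_j \frac{c_j-b_j}{m^*}$), crosses over once to $\bar w\in L'$, and then proceeds \emph{up} along $L'$ to $w$. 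Both zig-zags produce the same linear map wherever both are defined, since above $c$ all transitions are isomorphisms, so this is a genuine but not deep difference.

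The real issue is the one you flag yourself: the ``bad range'' of $s$. Your proposed remedies do not actually close it. Choosing $K,C$ ``with enough slack'' only translates the bad region; it cannot eliminate it. Concretely, if $m'_i>m_i$ for some coordinate $i$, then for $s\to-\infty$ one has $u'(s+\eta)_i - u(s)_i = s(m'_i-m_i)+\eta m'_i+(b'_i-b_i)\to -\infty$, so the direct transition $\p_{\bf M}(u(s),u'(s+\eta))$ is not defined, and simultaneously $u'(s+\eta)\not\succ c$, so the inverse $\p_{\bf M}(u'(s+\eta),w(s))^{-1}$ is not supplied by the hypothesis. Thus neither of your two prescriptions applies for such $s$, and merely enlarging $\eta$ shifts this interval without emptying it (the two thresholds both move to $-\infty$, at different linear rates in $\eta$). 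The paper anchors the iso-based branch at the \emph{fixed} threshold $\bar s$ and uses the direct transition $\p_{\bf M}(u,w)$ only for $s\le\bar s$, with the explicit constants $C=\max\{\|\vec m\|_\infty,\|\vec m'\|_\infty\}$, $B=\max\{\|b\|_\infty,\|b'\|_\infty\}$, $A=\max\bigl\{\max_j|c_j-b_j|\cdot\frac{\|\vec m\|_\infty}{m^*},\ \max_j|c_j-b'_j|\cdot\frac{\|\vec m'\|_\infty}{m'^*}\bigr\}$, $K=A+2B$, and the inequality $t\le s+\eta$ to obtain $u\preceq w$. To turn your sketch into a proof you need to supply a comparably concrete case split and verify domain-of-definition coordinatewise on each piece; it is not enough to gesture at ``slack'' in the constants.
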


\begin{proof}
 We set $C= \max \{\|\vec m\|_\infty,\|\vec m'\|_\infty\}$ and $K=A+2B$ with $B=\max \{\|b\|_\infty,\|b'\|_\infty\}$ and 
$$A=\max\{\max_j |c_j-b_j|\cdot \frac{\|\vec m\|_\infty}{m^*}, \max_j |c_j-b'_j|\cdot \frac{\|\vec m'\|_\infty}{m'^*} \}.$$

We now show that with this setting there exist  morphisms $\alpha:{\bf M}_L\to {\bf M}_{L'}(\eta)$ and $\beta: {\bf M}_{L'}\to {\bf M}_L(\eta)$ such that $\beta\left(s+\eta\right)\circ \alpha(s)=\p_{{\bf M}_L}(s,s+2\eta)$ and $\alpha(s+\eta)\circ \beta(s)=\p_{{\bf M}_{L'}}(s,s+2\eta)$, for every $s\in\R$. The idea of the construction of $\alpha$ and $\beta$ is illustrated in Figure~\ref{fig:internal}.

\begin{figure}
\includesvg[width=\textwidth]{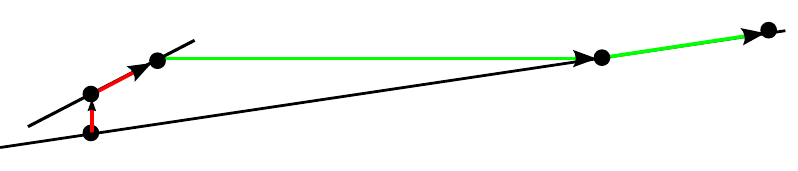}
\includesvg[width=0.95\textwidth]{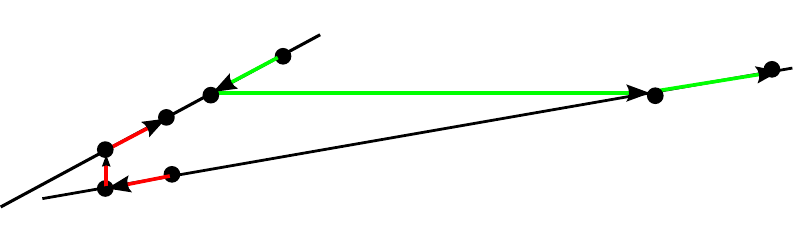}
\caption{The maps $\alpha$ and $\beta$ defined in Proposition~\ref{internal}: $\alpha$ is given by the composition of red maps, $\beta$ is given by the composition of green maps. Top: the case $s\le \max_j \frac{c_j-b_j}{m^*}$. Bottom: the case $s> \max_j \frac{c_j-b_j}{m^*}$. } \label{fig:internal}
\end{figure}

In order to construct $\alpha$, we note that for each $u=s\vec m+b$ in $L$,    there exists, and is unique,  a point $v=t\vec m'+b'$ in $L'$, with $u\preceq v$ and $  v_{\hat\iota}= u_{\hat \iota}$ for some $\hat \iota$, $1\le\hat  \iota\le n$, because  $m'^*=\min_i m_i'>0$. We also let $w=(s+\eta)\cdot \vec m'+b'$. We  consider separately the cases $s\le \max_j \frac{c_j-b_j}{m^*}$ and $s> \max_j \frac{c_j-b_j}{m^*}$. We will see that in the first case it holds that $u\preceq w$ and therefore  the morphism $\p_{\bf M}(u,w)$ is defined between $({\bf M}_L)_s$ and $({\bf M}_{L'})_{s+\eta}$; in the second case, we will see that  the transition maps are isomorphisms.
Therefore we can set, for any $s\in\R$, 
$$\alpha(s)=\left\{\begin{array}{ll}
\p_{\bf M}(u,w) & \mbox{ for $s\le \max_j \frac{c_j-b_j}{m^*}$}\\
\p_{\bf M}(\bar w, w)\circ \p_{\bf M}(\bar u,\bar w)\circ \p_{\bf M}^{-1}(\bar u,u) & \mbox{ for $s>\max_j \frac{c_j-b_j}{m^*}$}
\end{array}\right.$$
where $u=s\cdot \vec m+b$ and $w=(s+\eta)\cdot \vec m'+b'$, and, setting $\bar s=\max_j \frac{c_j-b_j}{m^*}$,  $\bar u= \bar s\cdot \vec m+b$,  $\bar w=\left(\bar s+\eta \right)\cdot \vec m'+b'$. 

Let us first see that $u\preceq w$ when $s\le \max_j \frac{c_j-b_j}{m^*}$. We have
$$ t-s= \frac{u_{\hat\iota}-b'_{\hat \iota}}{m'_{\hat\iota}}-\frac{u_{\hat\iota}-b_{\hat \iota}}{m_{\hat\iota}}= \frac{ u_{\hat\iota}(m_{\hat\iota}-m'_{\hat\iota})+b_{\hat\iota}(m'_{\hat\iota}-m_{\hat\iota})+m_{\hat\iota}(b_{\hat\iota}-b'_{\hat\iota})}{m_{\hat\iota} m'_{\hat\iota}}.$$
The assumption $s\le \max_j \frac{c_j-b_j}{m^*}$ implies that 
$$u_{\hat\iota}=s \cdot  m_{\hat\iota}+b_{\hat\iota}\le \max_j \frac{c_j-b_j}{m^*}\cdot  m_{\hat\iota}+b_{\hat\iota}\le A+B.$$
Hence
\begin{equation}\label{disug}
t\le s+ \frac{K\cdot \|\vec m-\vec m'\|_\infty+ C\cdot \|b-b'\|_\infty}{m^*\cdot m'^*}=s+\eta.
\end{equation}
Thus $v\preceq w$, yielding $u\preceq w$ for $s\le \max_j \frac{c_j-b_j}{m^*}$.

Let us now see that, when $s> \max_j \frac{c_j-b_j}{m^*}=\bar s$, we have $u\succ \bar u\succeq c$ and $w\succ \bar w\succeq  c$. Indeed, for $i=1,\ldots , n$, it holds that  
\begin{equation}\label{c}
\bar s\cdot m_i+b_i = \max_j \frac{c_j-b_j}{m^*}\cdot m_i+b_i\ge \max_j (c_j-b_j)+b_i\ge  c_i,
\end{equation}
implying that $u\succ \bar u\succeq c$. In order to see that $w\succ \bar w\succeq  c$ too, let us set  $\bar v$ equal to the unique point in $L'$ such that $\bar u\preceq \bar v$ and $  \bar v_{\hat\iota}= \bar u_{\hat \iota}$  for some $\hat \iota$, $1\le\hat  \iota\le n$. By inequality (\ref{disug}) it holds that $\bar v\preceq \bar w$. Thus, from $s>\bar s$ and inequalities (\ref{c}) it follows  $w\succ \bar w \succeq \bar v\succeq \bar u\succeq c$. 
Hence $\alpha$ is well defined. Analogously,  we can define $\beta$.

Finally, it certainly holds that $\alpha$ and $\beta$ are morphisms of persistence persistence vector spaces, and  $\beta\left(s+\eta\right)\circ \alpha(s)=\p_{{\bf M}_L}(s,s+2\eta)$ and $\alpha(s+\eta)\circ \beta(s)=\p_{{\bf M}_{L'}}(s,s+2\eta)$ because $\alpha$ and $\beta$ have been defined using $\p_{\bf M}$ itself.

Having proved that ${\bf M}_L$ and ${\bf M}_{L'}$ are $\eta$-interleaved, the inequality $d_B({\bf M}_L,{\bf M}_{L'})\le \eta$ follows from the Algebraic Stability Theorem for Barcodes.

\end{proof}

\section{Acknowledgments}
The author wishes to thank M. Lesnick for  providing  constructive critical comments on the results presented here.

\end{document}